\definecolor{darkgreen}{rgb}{0.0,0,0.9}
\newtcolorbox{wbox}
{
	colback  = white,
}
\newcommand*{\suppress}[1]{}
\newcommand*{\cR}{\mathcal{R}}
\def\thm@space@setup{%
	\thm@preskip= 10pt
	\thm@postskip=\thm@preskip 
}
\renewcommand{\paragraph}{%
	\@startsection{paragraph}{4}%
	{\z@}{5pt}{-1em}%
	{\normalfont\normalsize\bfseries}%
}
\newtheorem{theorem}{Theorem}
\newtheorem{lemma}[theorem]{Lemma}
\newtheorem{proposition}[theorem]{Proposition}
\theoremstyle{definition}
\newtheorem{definition}[theorem]{Definition}
\newtheorem{remark}[theorem]{Remark}
\newtheorem{alg}[theorem]{Mechanism}
\newtheorem{example}[theorem]{Example}
\newenvironment{fminipage}%
{\begin{Sbox}\begin{minipage}}%
		{\end{minipage}\end{Sbox}\fbox{\TheSbox}}
\newcommand{\opt}{\mbox{\rm OPT}}
\title{The General Graph Matching Game: \\
 Approximate Core}
\author[1]{Vijay V.~Vazirani\footnote{Supported in part by NSF grant CCF-1815901.}}
\affil[1]{University of California, Irvine}
\date{}
\begin{document}
	\maketitle
	
	\begin{abstract}
The classic paper of Shapley and Shubik \cite{Shapley1971assignment} characterized the core of the assignment game using ideas from matching theory and LP-duality theory and their highly non-trivial interplay. Whereas the core of this game is always non-empty, that of the general graph matching game can be empty.

This paper salvages the situation by giving an imputation in the $2/3$-approximate core for the latter. This bound is best possible, since it is the integrality gap of the natural underlying LP. Our profit allocation method goes further: the multiplier on the profit of an agent is often better than ${2 \over 3}$ and lies in the interval $[{2 \over 3}, 1]$, depending on how severely constrained the agent is.   

Next, we provide new insights showing how discerning core imputations of an assignment games are by studying them via the lens of complementary slackness. We present a relationship between the competitiveness of individuals and teams of agents and the amount of profit they accrue in imputations that lie in the core, where by {\em competitiveness} we mean whether an individual or a team is matched in every/some/no maximum matching. This also sheds light on the phenomenon of degeneracy in assignment games, i.e., when the maximum weight matching is not unique. 

The core is a quintessential solution concept in cooperative game theory. It contains all ways of distributing the total worth of a game among agents in such a way that no sub-coalition has incentive to secede from the grand coalition. Our imputation, in the $2/3$-approximate core, implies that a sub-coalition will gain at most a $3/2$ factor by seceding, and less in typical cases.
\end{abstract}

\bigskip
\bigskip
\bigskip
\bigskip
\bigskip
\bigskip
\bigskip
\bigskip
\bigskip
\bigskip
\bigskip
\bigskip
\bigskip
\bigskip
\bigskip
\bigskip
\bigskip
\bigskip

\pagebreak

\section{Introduction}
\label{sec.intro}

The matching game forms one of the cornerstones of cooperative game theory. This game can also be viewed as a matching market in which utilities of the agents are stated in monetary terms and side payments are allowed, i.e., it is a {\em transferable utility (TU) market}. A key solution concept in this theory is that of the {\em core}, which captures all possible ways of distributing the total worth of a game among individual agents in such a way that the grand coalition remains intact, i.e., a sub-coalition will not be able to generate more profits by itself and therefore has no incentive to secede from the grand coalition. For an extensive coverage of these notions, see the book by Moulin \cite{Moulin2014cooperative}.    

When restricted to bipartite graphs, the matching game is called the {\em assignment game}. The classic paper of Shapley and Shubik \cite{Shapley1971assignment} characterized profit-sharing methods that lie in the core of such games by using ideas from matching theory and LP-duality theory and their highly non-trivial interplay; in particular, the core is always non-empty.

On the other hand, for games defined over general graphs, the core is not guaranteed to be non-empty, see Section for an easy proof. The purpose of this paper is to salvage the situation to the extent possible by giving a notion of approximate core for such games. The approximation factor we achieve is $2/3$. This is best possible, since it is the integrality gap of the underlying LP; this follows easily from an old result of Balinski \cite{Balinski1965integer} characterizing the vertices of the polytope defined by the constraints of this LP. It turns out that an optimal integral solution to this LP gives the worth of the game and the constraints of dual of this LP must be respected by any profit-sharing mechanism.  

An interesting feature of our profit-sharing mechanism is that it restricts only the most severely constrained agents to a multiplier of $2/3$, and the less severely constrained an agent is, the better is her multiplier, all the way to 1; bipartite graphs belong to the last category. One way of stating the improved factor is: if the underlying graph has no odd cycles of length less than $2k+1$, then our factor is ${{2k} \over {2k+1}}$.

The following setting, taken from \cite{Eriksson2001stable} and \cite{Biro2012computing}, vividly captures the underlying issues. Suppose a tennis club has a set $V$ of players who can play in an upcoming doubles tournament. Let $G = (V, E)$ be a graph whose vertices are the players and an edge $(i, j)$ represents the fact that players $i$ and $j$ are compatible doubles partners. Let $w$ be an edge-weight function for $G$, where $w_{i j}$ represents the expected earnings if $i$ and $j$ do partner in the tournament. Then the total worth of agents in $V$ is the weight of a maximum weight matching in $G$. Assume that the club picks such a matching $M$ for the tournament. The question is how to distribute the total profit among the agents --- strong players, weak players and unmatched players --- so that no subset of players feel they will be better off seceding and forming their own tennis club. 

In Section \ref{sec.Degen}, we provide new insights on imputations in the core of an assignment game by studying them through the lens of complementary slackness. We present a relationship between the competitiveness of individuals and teams of agents and the amount of profit they accrue,  where by {\em competitiveness} we mean whether an individual or a team is matched in every/some/no maximum matching. Theorem \ref{thm.degen} shows once again how discerning core imputations are. 

Additionally, Theorem \ref{thm.degen} sheds light on the phenomenon of degeneracy in assignment games, i.e., when the maximum weight matching is not unique. Shapley and Shubik had mentioned this phenomenon; however, they claimed that ``in the most common case'' the optimal assignment will be unique. Furthermore,  their suggestion for  dealing with degeneracy was to perturb the edge weights of $G$ to make the optimal assignment unique, so that they only needed to address that case. However, perturbing the weights destroys crucial information contained in the original instance. 

Over the years, this phenomenon has been studied by other researchers. In an interesting work, Nunez and Rafels \cite{Nunez-Dimension}, studied relationships between degeneracy and the dimension of the core. The defined an agent to be {\em active} if her profit is not constant across the various imputations in the core, and non-active otherwise. Clearly, this notion has much to do with the dimension of the core, e.g., it is easy to see that if all agents are non-active, the core must be zero-dimensional. They prove that if all agents are active, then the core is full dimensional if and only if the game is non-degenerate. Furthermore, if there are exactly two optimal matchings, then the core can have any dimension between 1 and $m-1$, where $m$ is the smaller of $|U|$ and $|V|$; clearly, $m$ is an upper bound on the dimension.

In another interesting work, Chambers and Echenique \cite{Chambers2015core} study the following question: Given the entire set of optimal matchings of a game on $m = |U|$, $n = |V|$ agents, is there an $m \times n$ surplus matrix which has this set of optimal matchings. They give necessary and sufficient conditions for the existence of such a matrix.  

 Finally we remark that all facts used in this paper were available in 1971, the year of the Shapley-Shubik paper. Therefore, all our results could have been obtained then or at any point in the intervening half a century.

\section{Definitions and Preliminary Facts}
\label{sec.matching-game}

\begin{definition}
	\label{def.matching-game}
	The {\em general graph matching game} consists of an undirected graph $G = (V, E)$ and an edge-weight function $w$. The vertices $i \in V$ are the agents and an edge $(i, j)$ represents the fact that agents $i$ and $j$ are eligible for an activity, for concreteness, let us say that they are eligible to participate as a doubles team in a tournament. If $(i, j) \in E$, $w_{i j}$ represents the profit generated if $i$ and $j$ play in the tournament. The {\em worth} of a coalition $S \subseteq V$ is defined to be the maximum profit that can be generated by teams within $S$ and is denoted by $p(S)$. Formally, $p(S)$ is defined to be the weight of a maximum weight matching in the graph $G$ restricted to vertices in $S$ only. The {\em characteristic function} of the matching game is defined to be $p: 2^{V} \rightarrow \cR_+$.   
\end{definition}     

Among the possible coalitions, the most important one is of course $V$, the {\em grand coalition}. 

\begin{definition}
	\label{def.core}
	An {\em imputation} gives a way of dividing the worth of the game, $p(V)$, among the agents. Formally, it is a function $v: {V} \rightarrow \cR_+$ such that $\sum_{i \in V} {v(i)} = p(V)$. An imputation $t$ is said to be in the {\em core of the matching game} if for any coalition $S \subseteq V$, there is no way of dividing $p(S)$ among the agents in $S$ in such a way that all agents are at least as well off and at least one agent is strictly better off than in the imputation $t$.
\end{definition}
 
We next describe the characterization of the core of the assignment game given by Shapley and Shubik \cite{Shapley1971assignment}. Shapley and Shubik had described this game in the context of the housing market in which agents are of two types, buyers and sellers. They had shown that each imputation in the core of this game gives rise to unique prices for all the houses. In this paper we will present the assignment game in a variant of the tennis setting given in the Introduction; this will obviate the need to define ``prices'', hence leading to simplicity. 

Suppose a coed tennis club has sets $U$ and $V$ of women and men players, respectively, who can participate in an upcoming mixed doubles tournament. Assume $|U| = m$ and $|V| = n$, where $m, n$ are arbitrary. Let $G = (U, V, E)$ be a bipartite graph whose vertices are the women and men players and an edge $(i, j)$ represents the fact that agents $i \in U$ and $j \in V$ are  eligible to participate as a mixed doubles team in the tournament. Let $w$ be an edge-weight function for $G$, where $w_{i j}$ represents the expected earnings if $i$ and $j$ do participate as a team in the tournament. We will assume that $w(i, j) > 0$; this is reasonable since $i$ and $j$ are compatible. Once again, the total worth of the game is the weight of a maximum weight matching in $G$.

The linear program (\ref{eq.core-primal}) gives the LP-relaxation of the problem of finding such a matching. In this program, variable $x_{ij}$ indicates the extent to which edge $(i, j)$ is picked in the solution. Matching theory tells us that this LP always has an integral optimal solution; the latter is a maximum weight matching in $G$.

	\begin{maxi}
		{} {\sum_{(i, j) \in E}  {w_{ij} x_{ij}}}
			{\label{eq.core-primal-bipartite}}
		{}
		\addConstraint{\sum_{(i, j) \in E} {x_{ij}}}{\leq 1 \quad}{\forall i \in U}
		\addConstraint{\sum_{(i, j) \in E} {x_{ij}}}{\leq 1 }{\forall j \in V}
		\addConstraint{x_{ij}}{\geq 0}{\forall (i, j) \in E}
	\end{maxi}

Taking $u_i$ and $v_j$ to be the dual variables for the first and second constraints of (\ref{eq.core-primal-bipartite}), we obtain the dual LP: 

 	\begin{mini}
		{} {\sum_{i \in U}  {u_{i}} + \sum_{j \in V} {v_j}} 
			{\label{eq.core-dual-bipartite}}
		{}
		\addConstraint{ u_i + v_j}{ \geq w_{ij} \quad }{\forall (i, j) \in E}
		\addConstraint{u_{i}}{\geq 0}{\forall i \in U}
		\addConstraint{v_{j}}{\geq 0}{\forall j \in V}
	\end{mini}

The definition of an imputation, Definition \ref{def.core}, needs to be modified in an obvious way to distinguish the profit shares of women and men players. We will denote an imputation by $(u, v)$. 

\begin{theorem}
	\label{thm.SS}
	(Shapley and Shubik \cite{Shapley1971assignment})
	The imputation $(u, v)$ is in the core of the assignment game if and only if it is an optimal solution to the dual LP, (\ref{eq.core-dual-bipartite}). 
\end{theorem}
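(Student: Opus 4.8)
The plan is to reduce the core condition to a family of linear inequalities and then match those inequalities against dual feasibility and dual optimality. First I would record the standard reformulation of the core: an imputation $(v,u)$ lies in the core if and only if it is \emph{coalitionally rational}, i.e.\ $\sum_{k \in S} (\text{share of } k) \ge p(S)$ for every coalition $S \subseteq B \cup R$, together with the budget identity $\sum_{i \in B} v_i + \sum_{j \in R} u_j = p(V)$ already built into the definition of an imputation. The forward direction of this reformulation is immediate: if some $S$ had total share below $p(S)$, one could redistribute $p(S)$ among the members of $S$ so that everyone is weakly better off and someone is strictly better off. The reverse is the contrapositive of the same observation, since making everyone in $S$ weakly better off while making someone strictly better off forces the total distributed, namely $p(S)$, to strictly exceed $\sum_{k \in S} (\text{share of } k)$. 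I would state this as a short preliminary claim.

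The main engine is LP-duality. Since $G$ is bipartite, the constraint matrix of (\ref{eq.core-primal-bipartite}) is totally unimodular, so the primal LP has an integral optimum whose value is exactly the worth $p(V)$ of the grand coalition. By strong duality, the optimal value of the dual (\ref{eq.core-dual-bipartite}) equals $p(V)$ as well. This single identity --- dual optimum $= p(V)$ --- is what lets me convert the budget constraint into an optimality statement and back.

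For the direction ``optimal dual $\Rightarrow$ core,'' I would take an optimal dual solution $(v,u)$. Nonnegativity together with objective value $p(V)$ shows it is a valid imputation. To verify coalitional rationality for a coalition $S$, fix a maximum-weight matching $M_S$ in $G$ restricted to $S$, so that $p(S) = \sum_{(i,j) \in M_S} w_{ij}$. Applying the dual constraint $v_i + u_j \ge w_{ij}$ on each edge of $M_S$ and summing, the fact that $M_S$ is a matching (all endpoints distinct) together with nonnegativity of the shares of unmatched vertices yields $p(S) \le \sum_{k \in S} (\text{share of } k)$. For the converse, ``core $\Rightarrow$ optimal dual,'' I would first establish dual feasibility by testing the two-agent coalitions $S = \{i,j\}$ for each edge $(i,j) \in E$: coalitional rationality gives $v_i + u_j \ge p(\{i,j\}) \ge w_{ij}$, which is exactly the dual edge constraint, while $v_i, u_j \ge 0$ is part of being an imputation. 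A dual-feasible point whose objective equals the optimum $p(V)$ --- which it does because $(v,u)$ is an imputation --- must be optimal.

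I do not anticipate a serious obstacle; the proof is essentially an accounting argument glued together by strong duality. The one place that deserves care is the core reformulation claim, where I must argue both that a deficient coalition can be exploited and that no coalition can be exploited once every coalitional inequality holds --- this is where the ``strictly better off'' clause interacts with the exact budget $p(S)$. Beyond that, the only external inputs are total unimodularity of the bipartite incidence structure and strong LP-duality, both of which I may invoke directly.
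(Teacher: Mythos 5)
Your proof is correct, and since the paper only cites this theorem from Shapley and Shubik without reproducing a proof, there is nothing in the paper to diverge from: your argument is precisely the standard LP-duality proof that the theorem's very statement (phrased in terms of the dual LP (\ref{eq.core-dual-bipartite})) presupposes, namely reducing the core to coalitional rationality, getting feasibility from two-agent coalitions, and using integrality of the bipartite matching LP plus strong duality to equate the budget condition with dual optimality. The only point worth making explicit is that $p(\{i,j\}) \geq w_{ij}$ relies on edge weights being nonnegative, which is consistent with the paper's setup since the characteristic function maps into $\cR_+$.
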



For general graphs, the LP relaxation of the maximum weight matching problem is an enhancement of that for bipartite graphs via odd set constraints, as given below in (\ref{eq.core-primal-general}). The latter constraints are exponential in number.

	\begin{maxi}
		{} {\sum_{(i, j) \in E}  {w_{ij} x_{ij}}}
			{\label{eq.core-primal-general}}
		{}
		\addConstraint{\sum_{(i, j) \in E} {x_{ij}}}{\leq 1 \quad}{\forall i \in V}
		\addConstraint{\sum_{(i, j) \in S} {x_{ij}}}{\leq {{(|S|-1)} \over 2} \quad}{\forall S \subseteq V, \ S \ \mbox{odd}}
		\addConstraint{x_{ij}}{\geq 0}{\forall (i, j) \in E}
	\end{maxi}

The dual of this LP has, in addition to variables corresponding to vertices, $v_i$, exponentially many more variables corresponding to odd sets, $z_S$, as given in (\ref{eq.core-dual-general}). As a result, the entire worth of the game does not reside on vertices only --- it also resides on odd sets.

 	\begin{mini}
		{} {\sum_{i \in V}  {v_{i}} + \sum_{S \subseteq V, \ \mbox{odd}} {z_S}} 
			{\label{eq.core-dual-general}}
		{}
		\addConstraint{v_i + v_j + \sum_{S \, \ni \, i, j}{z_S}}{ \geq w_{ij} \quad }{\forall (i, j) \in E}
		\addConstraint{v_{i}}{\geq 0}{\forall i \in V}
		\addConstraint{z_{S}}{\geq 0}{\forall S \subseteq V, \ S \ \mbox{odd}}
	\end{mini}

There is no natural way of dividing $z_S$ among the vertices in $S$ to restore core properties. The situation is more serious than that: it turns out that in general, the core of a non-bipartite game may be empty. 

A (folklore) proof of the last fact goes as follows: Consider the graph $K_3$, i.e., a clique on three vertices, $i, j, k$, with a weight of 1 on each edge. Any maximum matching in $K_3$ has only one edge, and therefore the worth of this game is 1. Suppose there is an imputation $v$ which lies in the core. Consider all three two-agent coalitions. Then, we must have:
\[ v(i) + v(j) \geq 1, \ \ \ \  v(j) + v(k) \geq 1 \ \ \ \ \mbox{and} \ \ \ \ v(i) + v(k) \geq 1 .\]
This implies $v(i) + v(j) + v(k) \geq 3/2$ which exceeds the worth of the game, giving a contradiction.

Observe however, that if we distribute the worth of this game as follows, we get a $2/3$-approximate core allocation: $v(i) = v(j) = v(k) = 1/3$. Now each edge is covered to the extent of $2/3$ of its weight. In Section \ref{sec.App-Core} we show that such an approximate core allocation can always be obtained for the general graph matching game. 

\begin{remark}
	\label{rem.TU}
	In the assignment game, monetary transfers are required only between a buyer-seller pair who are involved in a trade. In contrast, observe that in the approximate core allocation given above for $K_3$, transfers are even made from agents who make trade to agents who don't, thereby exploiting the TU market's capabilities more completely. 
\end{remark}

\begin{definition}
	\label{def.app-core}
	Let $p: 2^{V} \rightarrow \cR_+$ be the characteristic function of a game and let $1 \geq \alpha > 0$. An imputation $t: V \rightarrow \cR_+$ is said to be in the {\em $\alpha$-approximate core} of the game if:
	\begin{enumerate}
		\item The total profit allocated by $t$ is at most the worth of the game, i.e., 
		 \[\sum_{i \in V} {t_i} \leq p(V) .\]
		\item  The total profit accrued by agents in a sub-coalition $S \subseteq V$ is at least $\alpha$ fraction of the profit which $S$ can generate by itself, i.e., 
		\[ \forall S \subseteq V: \ \sum_{i \in S} {t_i} \geq \alpha \cdot p(S) .\]
	\end{enumerate} 
\end{definition}

If imputation $t$ is in the $\alpha$-approximate core of a game, then the ratio of the total profit of any sub-coalition on seceding from the grand coalition to its profit while in the grand coalition is bounded by a factor of at most ${1 \over {\alpha}}$.

In Section \ref{sec.App-Core} we will need the following notion.

\begin{definition}
	\label{def.int-gap}
	Consider a linear programming relaxation for a maximization problem. For an instance $I$ of the problem, let $\opt(I)$ denote the weight of an optimal solution to $I$ and let $\opt_f(I)$ denote the weight of an optimal solution to the LP-relaxation of $I$. Then, the {\em integrality gap} of this LP-relaxation is defined to be:
	\[ \inf_I {{\opt(I)} \over {\opt_f(I)}} . \] 
\end{definition}

\section{A $2/3$-Approximate Core for the Matching Game}
\label{sec.App-Core}

We will work with the following LP-relaxation of the maximum weight matching problem, (\ref{eq.core-primal}). This relaxation always has an integral optimal solution in case $G$ is bipartite, but not in general graphs. In the latter, its optimal solution is a maximum weight fractional matching in $G$. 

	\begin{maxi}
		{} {\sum_{(i, j) \in E}  {w_{ij} x_{ij}}}
			{\label{eq.core-primal}}
		{}
		\addConstraint{\sum_{(i, j) \in E} {x_{ij}}}{\leq 1 \quad}{\forall i \in V}
		\addConstraint{x_{ij}}{\geq 0}{\forall (i, j) \in E}
	\end{maxi}

Taking $v_i$ to be dual variables for the first constraint of (\ref{eq.core-primal}), we obtain  LP (\ref{eq.core-dual}). Any feasible solution to this LP is called a {\em cover} of $G$ since for each edge $(i, j)$, $v_i$ and $v_j$ cover edge $(i, j)$ in the sense that $v_i + v_j \geq w_{ij}$. An optimal solution to this LP is a {\em minimum cover}. We will say that $v_i$ is the {\em profit} of vertex $i$.

 	\begin{mini}
		{} {\sum_{i \in V}  {v_{i}}} 
			{\label{eq.core-dual}}
		{}
		\addConstraint{v_i + v_j}{ \geq w_{ij} \quad }{\forall (i, j) \in E}
		\addConstraint{v_{i}}{\geq 0}{\forall i \in V}
	\end{mini}

By the LP Duality Theorem, the weight of a maximum weight fractional matching equals the total profit of a minimum cover. If for graph $G$, LP (\ref{eq.core-primal}) has an integral optimal solution, then it is easy to see that an optimal dual solution gives a way of allocating the total worth which lies in the core. Deng et. al. \cite{Deng1997algorithms} prove that the core of this game is non-empty if and only if LP (\ref{eq.core-primal}) has an integral optimal solution.
	
We will say that a solution $x$ to LP (\ref{eq.core-primal}) is {\em half-integral} if for each edge $(i, j)$, $x_{ij}$ is 0, 1/2 or 1. Balinski \cite{Balinski1965integer} showed that the vertices of the polytope defined by the constraints of LP (\ref{eq.core-primal}) are half-integral, see Theorem \ref{thm.Balinski}. 

\begin{theorem}
	\label{thm.Balinski}
	(Balinski \cite{Balinski1965integer})
	The vertices of the polytope defined by the constraints of LP (\ref{eq.core-primal}) are half-integral.
\end{theorem}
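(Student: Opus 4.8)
The plan is to prove Theorem~\ref{thm.Balinski} by establishing the sharper structural fact that, at any vertex $x$ of the polytope $P$ cut out by the constraints of LP~(\ref{eq.core-primal}), the \emph{strictly fractional} edges $F = \{e \in E : 0 < x_e < 1\}$ form vertex-disjoint odd cycles on every one of which $x$ is constant equal to $1/2$; since the remaining edges then satisfy $x_e \in \{0,1\}$, half-integrality follows at once. The tool is the extreme-point characterization: $x$ is a vertex of $P$ if and only if there is no nonzero direction $d \in \R^{E}$ with both $x + \epsilon d$ and $x - \epsilon d$ feasible for some small $\epsilon > 0$. The entire argument reduces to analysing the subgraph $H = (V, F)$ on the strictly fractional edges.

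First I would pin down where an admissible perturbation $d$ can be supported. If $x_e = 0$ then $d_e = 0$, or else one of $x \pm \epsilon d$ becomes negative; if $x_e = 1$ then both endpoints have their constraint $\sum_j x_{ij} \le 1$ tight while all their other incident edges are $0$, which again forces $d_e = 0$. Hence $d$ lives on $F$. Moreover, at every vertex $i$ whose constraint is tight, feasibility of both $x + \epsilon d$ and $x - \epsilon d$ forces $\sum_j d_{ij} = 0$, whereas at a slack vertex no condition is imposed for small $\epsilon$. Thus a perturbation is exactly a nonzero vector in the kernel of the (unsigned) vertex--edge incidence matrix $B_T$ whose rows are the tight vertices and whose columns are the edges of $F$; equivalently, $x$ is a vertex iff $B_T$ has full column rank. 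A key preliminary observation is that every leaf of $H$ is slack: its unique fractional edge cannot be complemented at that vertex by an edge of value $1$, so its degree sum is strictly below $1$.

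Next I would run a rank/counting argument one connected component of $H$ at a time. Fix a component with $n_c$ vertices, $m_c$ edges, and $t_c$ tight vertices. Full column rank of $B_T$ requires $t_c \ge m_c$, while connectivity gives $m_c \ge n_c - 1$, so the number of slack vertices obeys $n_c - t_c \le n_c - m_c \le 1$. Because a tree has at least two leaves and leaves are slack, no component is a tree; hence $m_c \ge n_c$, which forces $t_c = n_c = m_c$, so the component is a single cycle with every vertex tight. An even cycle has incidence rank $n_c - 1 < m_c$, yielding the alternating $\pm 1$ kernel vector and thus a perturbation, so the cycle must be odd. Finally, each cycle vertex is tight with its two cycle edges as its only support edges, giving $x_e + x_{e'} = 1$ for consecutive edges; around an odd cycle this system has the unique solution $x_e = 1/2$, completing the structural claim and hence half-integrality.

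The step I expect to be most delicate is the bookkeeping separating tight from slack vertices, in particular proving that leaves of $H$ are slack and distilling this into the bound $n_c - t_c \le 1$, since this single inequality is what simultaneously rules out tree components and even cycles. Once the component has been forced to be an odd cycle with all vertices tight, passing from a nonsingular incidence matrix to the explicit value $x_e = 1/2$ is routine.
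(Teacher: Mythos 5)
Your proof is correct, but there is nothing in the paper to compare it against: the paper quotes Theorem~\ref{thm.Balinski} as a known result of Balinski~\cite{Balinski1965integer} and never proves it (only Corollary~\ref{cor.Balinski} receives a proof). Judged on its own, your perturbation/rank argument is sound: restricting admissible perturbation directions to the strictly fractional edges $F$, noting that leaves of $(V,F)$ must be slack, and then forcing each edge-containing component of $F$ to be an odd cycle of tight vertices carrying value $1/2$ is a legitimate and classical route, and it proves strictly more than the statement --- it shows the fractional support of any vertex is a vertex-disjoint union of odd $1/2$-cycles, which is precisely the structure the paper has to manufacture by hand in Section~\ref{sec.App-Core} (rounding away the half-integral paths and even cycles of an optimal $x$). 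Two places deserve one more line each: (i) the component analysis should be restricted to components containing at least one edge, so that ``a tree has at least two leaves'' applies; (ii) the step from $t_c=n_c=m_c$ to ``the component is a single cycle'' is not immediate, since a connected graph with $n_c$ vertices and $n_c$ edges can be a cycle with trees attached --- but any attached tree would create a leaf, leaves are slack, and you have shown all $n_c$ vertices are tight, so the minimum degree is $2$ and the component is a bare cycle. Finally, it is worth noting that the paper's own toolkit yields a shorter (if less informative) proof: the Nemhauser--Trotter doubling used in Section~\ref{sec.App-Core} lifts any feasible $x$ of LP~(\ref{eq.core-primal}) to a fractional matching $x'$ of the bipartite graph $G'$; since the bipartite fractional matching polytope is integral, $x'$ is a convex combination of integral matchings of $G'$, and projecting these back to $G$ (averaging the two copies of each edge) expresses $x$ as a convex combination of half-integral feasible points, so every extreme point is half-integral. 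Your argument trades that slickness for the explicit odd-cycle description, which is exactly the extra information the rest of the paper relies on.
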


As a consequence of Theorem \ref{thm.Balinski}, LP (\ref{eq.core-primal}) always has a half-integral  optimal solution. Biro et. al \cite{Biro2012computing} gave an efficient algorithm for finding an optimal half-integral matching by using an idea of Nemhauser and Trotter  \cite{Nemhauser1975vertex} of doubling edges, hence obtaining an efficient algorithm for determining if the core of the game is non-empty. 

Our mechanism starts by using the doubling idea of \cite{Nemhauser1975vertex}. Transform $G = (V, E)$ with edge-weights $w$ to graph $G' = (V', E')$ and edge weights $w'$ as follows. Corresponding to each $i \in V$, $V'$ has vertices $i'$ and $i''$, and corresponding to each edge $(i, j) \in E$, $E'$ has edges $(i', j'')$ and $(i'', j')$ each having a weight of $w_{ij}/2$. 

Since each cycle of length $k$ in $G$ is transformed to a cycle of length $2k$ in $G'$, the latter graph has only even length cycles and is bipartite. A maximum weight matching and a minimum cover for $G'$ can be computed in polynomial time \cite{LP.book}, say $x'$ and $v'$, respectively. Next, let
\[ x_{ij} \ = \ {1 \over 2} \cdot (x_{i', j''} + x_{i'', j'}) \ \ \ \ \mbox{and} \ \ \ \  v_i = (v_{i'} + v_{i''}) .\]
It is easy to see that the weight of $x$ equals the value of $v$, thereby implying that $v$ is an optimal cover. 

\begin{lemma}
	\label{lem.opt-primal-dual}
	$x$ is a maximum weight half-integral matching and $v$ is an optimal cover in $G$. 
\end{lemma}

\begin{proof}
We will first use the fact that $v'$ is a feasible cover for $G'$ to show that $v$ is a feasible cover for $G$. Corresponding to each edge $(i, j)$ in $G$, we have two edges in $G'$ satisfying:
\[ v'_{i'} + v'_{j''} \geq {1 \over 2} \cdot w_{ij} \ \ \ \  \mbox{and} \ \ \ \ v'_{i''} + v'_{j'} \geq {1 \over 2} \cdot w_{ij} .\]
Therefore, in $G$, $v_i + v_j \geq w_{ij}$, implying feasibility of $v$. 

	By the LP-duality theorem, the weight of $x'$ equals the value of $v'$ in $G'$. Corresponding to each edge $(i, j)$ in $G$, we have:
\[ x_{ij} \cdot w_{ij} \ = \ \left( {1 \over 2} \cdot (x_{i', j''} + x_{i'', j'}) \right) \cdot w_{ij} \ = \ x_{i', j''} \cdot {{w'_{ij}}} + x_{i'', j'} \cdot {{w'_{ij}}} . \]
Adding over all edges, we get that the weight of $x$ in $G$ equals the weight of $x'$ in $G'$. Furthermore, the profit of $i$ equals the sum of profits of ${i'}$ and ${i''}$. Therefore the value of $v$ in $G$ equals the value of $v'$ in $G'$.  

Putting it together, we get that the weight of $w$ equals the value of $v$, implying optimality of both. Clearly, $x$ is half-integral. The lemma follows.
\end{proof}

Edges that are set to half in $x$ form connected components which are either paths or cycles. For any such path, consider the two matchings obtained by picking alternate edges. The half-integral solution for this path is a convex combination of these two integral matchings. Therefore both these matchings must be of equal weight, since otherwise we can obtain a heavier matching. Pick any of them. Similarly, if a cycle is of even length, pick alternate edges and match them. This transforms $x$ to a maximum weight half-integral matching in which all edges that are set to half form disjoint odd cycles. Henceforth we will assume that $x$ satisfies this property.  

	Let $C$ be a half-integral odd cycle in $x$ of length $2k+1$, with consecutive vertices $i_1, \ldots i_{2k+1}$. Let $w_C = w_{i_1, i_2} + w_{i_2, i_3} + \ldots + w_{i_{2k+1}, i_1}$ and $v_C = v_{i_1} + \ldots v_{i_{2k+1}}$. On removing any one vertex, say $i_j$, with its two edges from $C$, we are left with a path of length $2k-1$. Let $M_j$ be the matching consisting of the $k$ alternate edges of this path and let $w(M_j)$ be the weight of this matching.

\begin{lemma}
	\label{lem.odd-cycle}
	Odd cycle $C$ satisfies:
	\begin{enumerate}
		\item $w_C = 2 \cdot v_C$
		\item $C$ has a unique cover: $v_{i_j} = v_C - w(M_j)$, for $1 \leq j \leq 2k + 1$.
	\end{enumerate}
\end{lemma}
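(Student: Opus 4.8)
The plan is to extract everything from complementary slackness between the maximum weight fractional matching $x$ and the optimal cover $v$. Since every edge of $C$ carries $x$-value $1/2 > 0$, optimality forces each such edge to be tight in the dual, i.e.
\[ v_{i_\ell} + v_{i_{\ell+1}} = w_{i_\ell, i_{\ell+1}} \quad \text{for every consecutive pair on } C, \]
with indices read cyclically modulo $2k+1$. This single observation drives both parts of the lemma, so the first move is to record it explicitly.

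For part 1, I would simply sum these $2k+1$ tightness equations around the cycle. On the right we obtain $w_C$; on the left each vertex $i_\ell$ appears in exactly the two cycle edges incident to it, so the left-hand side collapses to $2 v_C$. Hence $w_C = 2 v_C$.

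For part 2, I first derive the claimed formula and then argue uniqueness. Removing $i_j$ leaves a path on the remaining $2k$ vertices, and its matching $M_j$ of $k$ alternate edges is a \emph{perfect} matching of exactly the vertex set $\{i_\ell : \ell \neq j\}$, covering each such vertex once. Summing the tightness equations over the edges of $M_j$ therefore telescopes to
\[ w(M_j) = \sum_{\ell \neq j} v_{i_\ell} = v_C - v_{i_j}, \]
which rearranges to $v_{i_j} = v_C - w(M_j)$, establishing the formula. For uniqueness, I would interpret ``cover'' in the sense forced by complementary slackness, namely a cover of $C$ that is tight on all of its edges, and suppose $v$ and $\tilde v$ are two such covers. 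Setting $d = v - \tilde v$ gives $d_{i_\ell} + d_{i_{\ell+1}} = 0$ on every edge, so the coordinates of $d$ alternate in sign as we walk around $C$; walking once around the cycle returns us to $i_1$ while flipping the sign an odd number of times, yielding $d_{i_1} = (-1)^{2k+1} d_{i_1} = -d_{i_1}$, whence $d_{i_1} = 0$ and so $d \equiv 0$.

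The only place where oddness of the cycle is genuinely used is this final telescoping step, and I expect it to be the main (though mild) obstacle: it is precisely the point that fails for even cycles, where the sign pattern closes up consistently and leaves a one-parameter family of tight covers. The remaining details to verify carefully are that complementary slackness does apply (i.e.\ that $x$ and $v$ are an optimal primal--dual pair, as established earlier in the excerpt) and that $M_j$ covers every vertex other than $i_j$ exactly once, so that the two summations collapse cleanly to $2 v_C$ and $v_C - v_{i_j}$ respectively.
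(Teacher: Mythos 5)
Your proof is correct and follows essentially the same route as the paper: complementary slackness forces tightness on every edge of $C$ (since each carries $x$-value $1/2$), summing these equations around the cycle gives part~1, and summing them over the edges of the perfect matching $M_j$ of $C \setminus \{i_j\}$ gives $w(M_j) = v_C - v_{i_j}$, which is part~2. Your additional sign-alternation argument for uniqueness is sound but not strictly necessary, since the derived formula $v_{i_j} = v_C - w(M_j) = \frac{1}{2}w_C - w(M_j)$ already pins down every coordinate of the cover purely in terms of the edge weights.
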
   

\begin{proof}
	{\bf 1).}  We will use the fact that $x$ and $v$ are optimal solutions to LPs (\ref{eq.core-primal}) and (\ref{eq.core-dual}), respectively. By the primal complementary slackness condition, for $1 \leq j \leq 2k + 1$, $w_{i_j, i_{j+1}} = v_{i_j} + v_{i_{j+1}}$, where addition in the subindices is done modulo $2k+1$; this follows from the fact that $x_{i_j, i_{j+1}} > 0$. Adding over all vertices of $C$ we get $w_C = 2 \cdot v_C$. 
	
	{\bf 2).}  By the equalities established in the proof of the first part, we get that for $1 \leq j \leq 2k + 1$, $v_C =  v_{i_j} + w(M_j)$. Rearranging terms gives the lemma. 
\end{proof}

Let $M'$ be heaviest matching among $M_j$, for $1 \leq j \leq 2k + 1$.

\begin{lemma}
	\label{lem.heaviest}
	\[ w(M') \geq {{2k} \over {2k+1}} \cdot v_C \]
\end{lemma}

\begin{proof}
	Adding the equality established in the second part of Lemma \ref{lem.odd-cycle} for all $2k+1$ values of $j$ we get:
	\[ \sum_{j = 1}^{2k+1} {w(M_j)} \ = \ (2k) \cdot v_C \]
	Since $M'$ is the heaviest of the $2k+1$ matchings in the summation, the lemma follows. 
\end{proof}
	
Modify the half-integral matching $x$ to obtain an integral matching $T$ in $G$ as follows. First pick all edges $(i, j)$ such that $x_{ij} = 1$ in $T$. Next, for each odd cycle $C$, find the heaviest matching $M'$ as described above and pick all its edges. 

\begin{definition}
	\label{def.app-cover}
Let $1 > \alpha > 0$. A function $c: V \rightarrow \cR_+$ is said to be an {\em ${\alpha}$-approximate cover} for $G$ if 
\[ \forall (i, j) \in E: \ \  c_i + c_j \geq \alpha \cdot w_{ij} \]
\end{definition}

	Define function $f: V \rightarrow [{2 \over 3}, 1]$ as follows: $\forall i \in V$:
			\begin{equation*}
    f(i) =
    \begin{cases*}
      {{2k} \over {2k+1}} & if $i$ is in a half-integral cycle of length $2k+1$.  \\
      1        & if $i$ is not in a half-integral cycle.
    \end{cases*}
  \end{equation*}

	Next, modify cover $v$ to obtain an approximate cover $c$ as follows: $\forall i \in V: \ c_i = f(i) \cdot v_i$.

\begin{lemma}
	\label{lem.app-cover}
$c$ is a ${2 \over 3}$-approximate cover for $G$. 
\end{lemma}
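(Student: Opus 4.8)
The plan is to unwind \Cref{def.app-cover}: it suffices to verify $c_i + c_j \ge \tfrac{2}{3}\, w_{ij}$ for every edge $(i,j)\in E$. The key observation is that the multiplier function $f$ takes values only in $[\tfrac23,1]$; indeed, on a half-integral cycle of length $2k+1$ one has $f = \tfrac{2k}{2k+1}$, and since $k\ge 1$ this is at least $\tfrac23$ (with equality exactly for a triangle), while $f=1$ on every other vertex. Thus $f(i)\ge \tfrac23$ for all $i \in V$, regardless of which of the two cases in the definition of $f$ applies.

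First I would record that $v$ is a genuine cover, so $v_i \ge 0$ for all $i$ and $v_i + v_j \ge w_{ij}$ for every edge; both facts come from feasibility of $v$ for the dual LP~(\ref{eq.core-dual}). Combining nonnegativity of $v_i$ with the lower bound on $f$ gives, vertex by vertex, $c_i = f(i)\, v_i \ge \tfrac23\, v_i$. Summing this over the two endpoints of an arbitrary edge $(i,j)$ yields
\[ c_i + c_j \;\ge\; \tfrac23\,(v_i + v_j) \;\ge\; \tfrac23\, w_{ij}, \]
where the last step is exactly the covering inequality $v_i + v_j \ge w_{ij}$. Since $(i,j)$ was arbitrary, this establishes the lemma.

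In this form the argument needs no case analysis on whether the endpoints lie in the same half-integral cycle, in different cycles, or outside all cycles: the uniform bound $f\ge \tfrac23$ absorbs every case at once. The only thing to check carefully is therefore the inequality $\tfrac{2k}{2k+1}\ge\tfrac23$ together with $v_i\ge 0$, both of which are immediate, so there is no real obstacle in this lemma. I would expect the genuinely delicate work --- exploiting the better multipliers $\tfrac{2k}{2k+1}$ on longer cycles to sharpen the factor edge-by-edge, and verifying that the reduced total $\sum_i c_i$ still does not exceed the worth $p(V)$ so that $c$ can be turned into an imputation in the $\tfrac23$-approximate core --- to be carried out in the subsequent results rather than here.
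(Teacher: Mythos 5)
Your proof is correct and is essentially identical to the paper's: both use the uniform bound $f(i) \geq \tfrac{2}{3}$ (valid since $\tfrac{2k}{2k+1} \geq \tfrac{2}{3}$ for $k \geq 1$) together with the covering inequality $v_i + v_j \geq w_{ij}$ to conclude $c_i + c_j \geq \tfrac{2}{3}(v_i+v_j) \geq \tfrac{2}{3} w_{ij}$. Your explicit mention of $v_i \geq 0$ is a small point the paper leaves implicit, but the argument is the same.
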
 

\begin{proof}
	Consider edge $(i, j) \in E$. Then
	\[ c_i + c_j \ = \ f(i) \cdot v_i + f(j)\ \cdot	 v_j \ \geq {2 \over 3} \cdot (v_i + v_j) \ \geq \ {2 \over 3} \cdot w_{ij} , \]
	where the first inequality follows from the fact that $\forall i \in V, \ f(i) \geq {2 \over 3}$ and the second follows from the fact that $v$ is a cover for $G$.
\end{proof}

The mechanism for obtaining imputation $c$ is summarized as Mechanism \ref{alg.core}.

\bigskip

\setcounter{figure}{1} 

\begin{figure}

	\begin{wbox}
		\begin{alg}
		\label{alg.core}
		{\bf (${2/3}$-Approximate Core Imputation)}\\
		\\
		\begin{enumerate}
			\item Compute $x$ and $v$, optimal solutions to LPs (\ref{eq.core-primal}) and (\ref{eq.core-dual}),  where $x$ is half-integral.
			\item Modify $x$ so all half-integral edges form odd cycles.
			\item $\forall i \in V$, compute:
			\begin{equation*}
    f(i) =
    \begin{cases*}
      {{2k} \over {2k+1}} & if $i$ is in a half-integral cycle of length $2k+1$.  \\
      1        & otherwise.
    \end{cases*}
  \end{equation*}
  \item $\forall i \in V$: \ \ $c_i \leftarrow f(i) \cdot v_i$.
		\end{enumerate} 
		\bigskip
		Output $c$.
		\end{alg}
	\end{wbox}
\end{figure}

\begin{theorem}
	\label{thm.main}
	The imputation $c$ is in the ${2 \over 3}$-approximate core of the general graph matching game.
\end{theorem}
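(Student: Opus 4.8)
The plan is to verify the two defining conditions of an $\alpha$-approximate core (Definition \ref{def.app-core}) with $\alpha = 2/3$: the coalition-rationality bound $\sum_{i \in S} c_i \ge \frac{2}{3}\, p(S)$ for every $S \subseteq V$, and the feasibility bound $\sum_{i \in V} c_i \le p(V)$. The first follows almost immediately from the fact that $c$ is a $2/3$-approximate cover (Lemma \ref{lem.app-cover}); the second is the substantive part and is where the construction of the integral matching $T$ together with Lemma \ref{lem.heaviest} does the real work.

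For coalition rationality I would fix any $S \subseteq V$ and let $M_S$ be a maximum weight matching in $G$ restricted to $S$, so that $w(M_S) = p(S)$. Since the edges of $M_S$ are vertex-disjoint and $c \ge 0$, summing the approximate-cover inequality of Lemma \ref{lem.app-cover} edge by edge gives
\[ \sum_{i \in S} c_i \ \ge \ \sum_{(i,j) \in M_S} (c_i + c_j) \ \ge \ \frac{2}{3} \sum_{(i,j) \in M_S} w_{ij} \ = \ \frac{2}{3}\, p(S) . \]
This establishes condition~2 for all coalitions at once.

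For feasibility the idea is to compare $\sum_i c_i$ against $w(T)$, using $w(T) \le p(V)$ because $T$ is an integral matching in $G$. I would partition $V$ according to the structure of the half-integral matching $x$ into the set $V_1$ of endpoints of the integral edges $E_1 = \{(i,j) : x_{ij}=1\}$, the set $V_{\mathrm{cyc}}$ of vertices on half-integral odd cycles, and the remaining unsaturated vertices $V_0$. By complementary slackness any $i$ with $\sum_j x_{ij} < 1$ has $v_i = 0$, so each vertex of $V_0$ contributes $c_i = 0$; and since $f \equiv 1$ off the cycles, each edge $(i,j) \in E_1$ contributes $c_i + c_j = v_i + v_j = w_{ij}$, by the same primal complementary slackness used in Lemma \ref{lem.odd-cycle}. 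For a cycle $C$ of length $2k+1$ its vertices contribute $\frac{2k}{2k+1}\, v_C$, which by Lemma \ref{lem.heaviest} is at most $w(M'_C)$, the weight of the matching kept from $C$ inside $T$. Adding the three groups then yields
\[ \sum_{i \in V} c_i \ = \ \sum_{(i,j) \in E_1} w_{ij} \ + \sum_{\text{cycles } C} \frac{2k_C}{2k_C+1}\, v_C \ \le \ \sum_{(i,j) \in E_1} w_{ij} + \sum_{\text{cycles } C} w(M'_C) \ = \ w(T) \ \le \ p(V) . \]

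The one genuine obstacle is the feasibility bound, and it hinges on recognizing that Lemma \ref{lem.heaviest} was tailored precisely for this accounting: the profit of a vertex in a $(2k+1)$-cycle is discounted by exactly the factor $\frac{2k}{2k+1}$, which is what guarantees that the total discounted cycle profit never exceeds the weight $T$ actually retains from those cycles. Everything else --- the vertex-disjointness that lets $E_1$ and the per-cycle matchings $M'_C$ coexist in $T$, and the vanishing of $c$ on $V_0$ --- is bookkeeping that I would check but expect to be routine.
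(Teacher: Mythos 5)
Your proposal is correct and follows essentially the same route as the paper: condition~2 of Definition \ref{def.app-core} via Lemma \ref{lem.app-cover} summed over a maximum weight matching of $S$, and condition~1 by comparing $\sum_{i \in V} c_i$ to $w(T)$, using Lemma \ref{lem.heaviest} to pay for each odd cycle. The only cosmetic difference is in the off-cycle accounting: you argue edge-by-edge via complementary slackness (tight constraints $v_i + v_j = w_{ij}$ on edges with $x_{ij}=1$, and $v_i = 0$ on vertices unsaturated by $x$), whereas the paper obtains the same equality globally by deleting the half-integral cycles and invoking LP duality together with part~1 of Lemma \ref{lem.odd-cycle} on what remains.
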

	
\begin{proof}
We need to show that $c$ satisfies the two conditions given in Definition \ref{def.app-core}, for $\alpha = {2 \over 3}$.

{\bf 1).}	By Lemma \ref{lem.heaviest}, the weight of the matched edges picked in $T$ from a half-integral odd cycle $C$ of length $2k+1$ is $\geq f(k) \cdot v_C \ = \ \sum_{i \in C} {c(i)}$. Next remove all half-integral odd cycles from $G$ to obtain $G'$. Let $x'$ and $v'$ be the projections of $x$ and $v$ to $G'$. 

By the first part of Lemma \ref{lem.odd-cycle}, the total decrease in weight in going from $x$ to $x'$ equals the total decrease in value in going from $v$ to $v'$. Therefore, the weight of $x'$ equals the total value of $v'$. Finally, observe that in $G'$, $T$ picks an edge $(i, j)$ if and only if $x' _{ij} = 1$ and $\forall i \in G', \ c_i = v' _i$. 

Adding the weight of the matching and the value of the imputation $c$ over $G'$ and all half-integral odd cycles we get $w(T) \geq \sum_{i \in V} {c_i}$.

{\bf 2).} Consider a coalition $S \subseteq V$. Then $p(S)$ is the weight of a maximum weight matching in $G$ restricted to $S$. Assume this matching is $(i_1, j_1), \ldots (i_k, j_k)$, where $i_1, \ldots i_k$ and $j_1, \ldots j_k \in S$. Then $p(S) = (w_{i_1 j_1} + \ldots + w_{i_k j_k})$. By Lemma \ref{lem.app-cover}, 
\[ c_{i_l} + c_{j_l} \ \geq \ {2 \over 3} \cdot w_{i_l , j_l}, \ \ \mbox{for} \ 1 \leq l \leq k .\] 
Adding all $k$ terms we get: 
\[ \sum_{i \in S} {c_i} \ \geq \ {2 \over 3} \cdot p(S) .\]
\end{proof}

We next show that the factor of $2/3$ cannot be improved by giving a {\em tight example}, i.e., an infinite family of graphs on which the imputation computed is in the ${2/3}$-approximate core.

\begin{example}
	\label{ex.tight}
	Consider the following infinite family of graphs. For each $n$, the graph $G_n$ has $6n$ vertices $i_l, j_l, k_l$, for $1 \leq l \leq 2n$, and $6n$ edges $(i_l, j_l), (j_l, k_l), (i_l, k_l)$, for $1 \leq l \leq 2n$ all of weight 1. Clearly, $\opt(G_n) = 2n$ and $\opt_f(G_n) = 3n$. In case a connected graph is desired, add a clique on the $2n$ vertices $i_l$, for $1 \leq l \leq 2n$, with the weight of each edge being $\epsilon$, where $\epsilon$ tends to zero.
\end{example}

Observe that for the purpose of Lemma \ref{lem.app-cover}, we could have defined $f$ simply as $\forall i \in V, \ f(i) = {2 \over 3}$. However in general, this would have left a good fraction of the worth of the game unallocated. The definition of $f$ given above improves the allocation for agents who are in large odd cycles and those who are not in odd cycles with respect to matching $x$. As a result, the gain of a typical sub-coalition on seceding will be less than a factor of ${3 \over 2}$, giving it less incentive to secede. One way of formally stating an  improved factor is given in Proposition \ref{prop.improvement}; its proof is obvious from that of Theorem \ref{thm.main}. 

\begin{proposition}
	\label{prop.improvement}
	Assume that the underlying graph $G$ has no odd cycles of length less than $2k+1$. Then  imputation $c$ is in the ${{2k} \over {2k+1}}$-approximate core of the matching game for $G$.
	\end{proposition}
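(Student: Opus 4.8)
The plan is to follow the proof of Theorem \ref{thm.main} essentially verbatim, replacing the uniform lower bound $f(i) \geq 2/3$ by the sharper bound $f(i) \geq \frac{2k}{2k+1}$ that the girth hypothesis supplies. The first thing I would establish is that every half-integral odd cycle $C$ in the matching $x$ is a simple odd cycle of $G$: by construction the half-valued edges of $x$ form vertex-disjoint odd cycles in the support graph, so each such cycle uses distinct vertices and edges of $G$ and is therefore a genuine odd cycle of $G$. Hence, if $C$ has length $2m+1$, the assumption that $G$ has no odd cycle of length less than $2k+1$ forces $2m+1 \geq 2k+1$, i.e.\ $m \geq k$.

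Next I would upgrade Lemma \ref{lem.app-cover}. Since $t \mapsto \frac{2t}{2t+1} = 1 - \frac{1}{2t+1}$ is increasing in $t$, the bound $m \geq k$ gives $f(i) = \frac{2m}{2m+1} \geq \frac{2k}{2k+1}$ for every $i$ lying in a half-integral cycle, while $f(i) = 1 \geq \frac{2k}{2k+1}$ for every $i$ not in such a cycle. Thus $f(i) \geq \frac{2k}{2k+1}$ for all $i \in V$, and rerunning the one-line computation of Lemma \ref{lem.app-cover} shows that $c$ is in fact a $\frac{2k}{2k+1}$-approximate cover:
\[ c_i + c_j = f(i) v_i + f(j) v_j \geq \frac{2k}{2k+1}(v_i + v_j) \geq \frac{2k}{2k+1}\, w_{ij} \quad \forall (i,j) \in E. \]

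With this strengthened cover in hand, the two conditions of Definition \ref{def.app-core} with $\alpha = \frac{2k}{2k+1}$ follow exactly as in Theorem \ref{thm.main}. Condition 1 is literally unchanged: the argument that $w(T) \geq \sum_{i \in V} c_i$ rests only on Lemmas \ref{lem.odd-cycle} and \ref{lem.heaviest}, which concern each individual cycle and make no reference to $k$, and since $T$ is a matching in $G$ we still conclude $\sum_{i \in V} c_i \leq w(T) \leq p(V)$. For Condition 2, I would take a maximum weight matching $(i_1,j_1),\ldots,(i_l,j_l)$ of $G$ restricted to a coalition $S$, apply the strengthened approximate-cover inequality to each matched edge, and sum over $l$ to obtain $\sum_{i \in S} c_i \geq \frac{2k}{2k+1}\, p(S)$.

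The only point requiring genuine care — and the step I would treat as the crux — is the first one: justifying that the half-integral cycles of $x$ really are simple odd cycles of $G$, so that the girth assumption applies to them and yields $m \geq k$. Everything after that is a transparent monotonicity observation together with a rerun of the earlier argument, which is why the statement can fairly be called obvious given Theorem \ref{thm.main}.
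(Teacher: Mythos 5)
Your proposal is correct and follows essentially the same route as the paper, which offers no separate argument and simply notes that the proposition's proof is obvious from that of Theorem~\ref{thm.main}. Your writeup supplies exactly the details that remark leaves implicit: the half-integral cycles of $x$ are genuine odd cycles of $G$, so the girth hypothesis forces each to have length at least $2k+1$, whence $f(i) \geq \frac{2k}{2k+1}$ for all $i \in V$, Lemma~\ref{lem.app-cover} strengthens to a $\frac{2k}{2k+1}$-approximate cover, and both conditions of Definition~\ref{def.app-core} follow by rerunning the proof of Theorem~\ref{thm.main} with $\alpha = \frac{2k}{2k+1}$ (Condition 1 being untouched, since it uses the cycle-specific factors).
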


Finally, we show in Theorem \ref{thm.tight} that the integrality gap of LP-relaxation (\ref{eq.core-primal}) is precisely ${2 \over 3}$. As a consequence of this fact, improving the approximation factor of an imputation for the matching game is not possible.

\begin{theorem}
	\label{thm.tight}
		The integrality gap of LP-relaxation (\ref{eq.core-primal}) is ${2 \over 3}$.
\end{theorem}

\begin{proof}
	From the proof of the first part of Theorem \ref{thm.main} we get:
	\[ w(T) \ \geq \ \sum_{i \in V} {c_i} \ \geq \ {2 \over 3} \cdot \sum_{i \in V} {v_i} \ = \ {2 \over 3} \cdot w(x)  .\]
	Therefore for any instance $I = (G, w)$, 
		\[  {{\opt(I)} \over {\opt_f(I)}} \geq {2 \over 3} . \] 
This places a lower bound of ${2 \over 3}$ the integrality gap of LP-relaxation (\ref{eq.core-primal}). 

An upper bound of ${2 \over 3}$ on the integrality gap of LP-relaxation (\ref{eq.core-primal}) is placed by the example given in Example \ref{ex.tight}. 
\end{proof}


\section{How Discerning are Core Imputations?}
\label{sec.Degen}

In Theorem \ref{thm.degen} we provide new insights on core imputations of an assignment game by studying them via the lens of complementary slackness. We present a relationship between the competitiveness of individuals and teams of agents and the amount of profit they accrue in imputations that lie in the core, where by {\em competitiveness} we mean whether an individual or a team is matched in every/some/no maximum matching. Theorem \ref{thm.degen} shows once again how discerning core imputations are. Additionally, it sheds light on the phenomenon of degeneracy in assignment games, i.e., when the maximum weight matching is not unique. 

\begin{definition}
	\label{def.team}
	By a {\em mixed doubles team} we mean an edge in $G$; a generic one will be denoted as $e = (u, v)$. We will say that $e$ is {\em viable} if there is a maximum weight matching $M$ in $G$ such that $e \in M$ and it is {\em subpar} if for every maximum weight matching $M$ in $G$, $e \notin M$. Let $y$ be an imputation in the core of the game. We will say that $e$ is {\em fairly paid in $y$} if $y_u + y_v = w_e$ and it is {\em overpaid} if $y_u + y_v > w_e$\footnote{Observe that by the first constraint of the dual LP (\ref{eq.core-dual-bipartite}), these are the only possibilities.}. Finally, we will say that $e$ is {\em always paid fairly} if it is fairly paid in every imputation in the core.
\end{definition}

\begin{definition}
	\label{def.player}
	A generic player in $U \cup V$ will be denoted by $v$. Let $y$ be an imputation in the core. We will say that $v$ {\em gets paid in $y$} if $y_v > 0$ and {\em does not get paid} otherwise. Furthermore, $v$ is {\em paid sometimes} if there is at least one imputation in the core under which $v$ gets paid, and it is {\em never paid} if it is not paid under every imputation. We will say that $v$ is {\em essential} if $v$ is matched in every maximum weight matching in $G$ and it is {\em not essential} otherwise. 
\end{definition}

\begin{remark}
	Clearly, if an assignment game is non-degenerate, then every team and every player is either always matched or always unmatched. 
\end{remark}

\begin{theorem}
	\label{thm.degen}
	The following hold:
	\begin{enumerate}
		\item For every team $e \in E$: 
		\[ e \ \mbox{is viable} \ \iff \ e \ \mbox{is always paid fairly} \]
		\item For every player $v \in V$: 
		\[ v \ \mbox{is essential} \ \iff \ v \ \mbox{is paid sometimes} \]  
	\end{enumerate}
\end{theorem}
	
\begin{proof}
The proofs follow by applying complementary slackness conditions and strict complementarity to the primal LP (\ref{eq.core-primal-bipartite}) and dual LP (\ref{eq.core-dual-bipartite}); see \cite{Sch-book} for formal statements of these facts. We will use Theorem \ref{thm.SS} stating that the set of imputations in the core of the game is precisely the set of optimal solutions to the dual LP.

{\bf 1).} Let $x$ and $y$ be optimal solutions to LP (\ref{eq.core-primal-bipartite}) and LP (\ref{eq.core-dual-bipartite}), respectively. By the Complementary Slackness Theorem, for each $e = (u, v) \in E: \ \ x_e \cdot (y_u + y_v - w_e) = 0$.

Suppose $e$ is viable. Then there is an optimal solution to the primal, say $x$, under which it is matched, i.e., $x_e > 0$. Let $y$ be an arbitrary optimal dual solution. Then, by the Complementary Slackness Theorem, $y_u + y_v = w_e$. Varying $y$ over all optimal dual solutions, we get that $e$ is always paid fairly. This proves the forward direction.

For the reverse direction, we will use strict complementarity. It implies that corresponding to each team $e$, there is a pair of optimal primal and dual solutions $x, y$ such that either $x_e = 0$ or $y_u + y_v = w_e$ but not both. 

For team $e$, assume that the right hand side of the first statement holds and that $x, y$ is a pair of optimal solutions for which strict complementarity holds for $e$. Since $y_u + y_v = w_e$ it must be the case that $x_e > 0$. Now, since the polytope defined by the constraints of the primal LP (\ref{eq.core-primal-bipartite}) has integral optimal vertices, there is a maximum weight matching under which $e$ is matched. Therefore $e$ is viable and the left hand side of the first statement holds. 

{\bf 2).} The proof is along the same lines and will be stated more succinctly. Again, let $x$ and $y$ be optimal solutions to LP (\ref{eq.core-primal-bipartite}) and LP (\ref{eq.core-dual-bipartite}), respectively. By the Complementary Slackness Theorem, for each $v \in V: \ y_v \cdot (x(\delta(v)) - 1) = 0$. 

Suppose $v$ is paid sometimes. Then, there is an imputation in the core, say $y$, such that $y_v > 0$. Therefore, for every primal optimal solution $x$, $x(\delta(v)) = 1$ and in every maximum weight matching in $G$, $v$ is matched. Hence $v$ is essential, proving the reverse direction. 

 Strict complementarity implies that corresponding to each player $v$, there is a pair of optimal primal and dual solutions $x, y$ such that either $y_v = 0$ or $x(\delta(v)) = 1$ but not both. Since we have already established that the second condition must be holding for $x$, we get that $y_v > 0$ and hence $v$ is paid sometimes.
\end{proof}

Negating both sides of the first statement proved in Theorem \ref{thm.degen} we get the following double-implication. For every team $e \in E$: 
		\[ e \ \mbox{is subpar} \ \iff \ e \ \mbox{is sometimes overpaid} \]

Clearly, this statement is equivalent to the first statement of Theorem \ref{thm.degen} and hence contains no new information. However, it may provide a new viewpoint. These two equivalent  statements yield the following assertion, which at first sight seems incongruous with what we desire from the notion of the core and the just manner in which it allocates profits:

\begin{center}
{\em Whereas viable teams are always paid fairly, subpar teams are sometimes overpaid.}
\end{center}

How can the core favor subpar teams over viable teams? However, there is a simple justification for this assertion: The players $u$ and $v$ of a subpar team $e = (u, v)$ generate profit by teaming up with other players and their total profit can indeed exceed $w_e$.

Similarly, the second statement of Theorem \ref{thm.degen} is equivalent to the following. For every player $v \in V$: 
		\[ v \ \mbox{is never paid} \ \iff \ v \ \mbox{is not essential} \]

\section{Discussion}
\label{sec.discussion}

In the $2/3$-approximate core imputation, observe that in an odd cycle of length $2k + 1$, $k$ pairs of agents are matched and one agent is left unmatched. As a consequence, monetary transfers may be needed from all $2k$ matched agents to the unmatched agent. What happens if monetary transfers to an agent are allowed from only a limited number of other agents? If so, what is the best approximation factor possible? See also Remark \ref{rem.TU}.

For the assignment game, Shapley and Shubik are able to characterize ``antipodal'' points in the core, i.e., imputations which are maximally distant. An analogous understanding of the ${2 \over 3}$-approximate core of the general graph matching game will be desirable. 

\section{Acknowledgements}
\label{sec.ack}

I wish to thank Federico Echenique and Thorben Trobst for valuable discussions.

	\bibliographystyle{alpha}
	\bibliography{refs}

\end{document}